\documentclass[journal]{IEEEtran}
\usepackage{multicol}
\usepackage{amsmath, amsthm, amssymb}
\usepackage{longtable,supertabular}
\usepackage{amsmath,algorithm,algpseudocode}
\usepackage{amssymb}
\usepackage{latexsym}
\usepackage{mathrsfs}

\newtheorem{theorem} {{Theorem}}
\newtheorem{proposition} {{Proposition}}

\newcommand{\cP}{{\cal P}}

\newcommand{\sP}{\cP}
\newcommand{\Ps}{\smash{{\sP\kern-2.0pt}_q\kern-0.5pt(n)}}


\newcommand{\Gr}{\mathcal{G}_q(k,n)}

\newcommand{\GF}{\mathbb{F}}

\DeclareMathOperator{\rank}{rank}
\DeclareMathOperator{\im}{im}

\title{Improving the Linkage Construction with Echelon-Ferrers for Constant-Dimension Codes}

\author{Xianmang He, Yindong Chen, Zusheng Zhang}

\begin{document}
\maketitle
\begin{abstract}
Echelon-Ferrers  is an important method to improve lower bounds for constant-dimension codes, which can be applied on various parameters.  Fagang Li~\cite{FagangLi2019Construction} combined the linkage construction and echelon-Ferrers to obtain some new lower bounds of constant-dimension codes.
In this letter, we generalize this linkage construction to obtain new lower bounds.

\textbf{keywords:} Subspace Coding,  Linkage Construction, Echelon-Ferrers Construction, Constant-Dimension Codes
\end{abstract}

\section{Introduction}

Let $\GF_q$ be the finite field with $q$ elements.
The set of all $k$-dimensional subspaces of an $\GF_q$-vector space $V$ will be denoted by $\Gr$.
In general, the {\it projective space} of order $n$ over the finite field~\smash{$\GF_q$}, denoted by $\Ps$, is the set of all subspaces of the vector space~\smash{$\GF_q^n$}.
All these subspaces form a metric space with respect to the \emph{subspace distance}, which is defined as
\[\begin{split}
  d_S(U,W):=&\   \dim(U+W)-\dim(U\cap W)\\
           =&\   2\cdot\dim(U+W)-\dim(U)-\dim(W),
\end{split}\]
where $U$ and $W$ are subspaces of  $\GF_q^n$.

A set $\mathcal{C}$ of subspaces of $V$ is called a \emph{subspace code}. The \emph{minimum distance} of $\mathcal{C}$ is given by $d = \min\{d_S(U,W) \mid U,W\in\mathcal{C},\ U \neq W\}$.  If the dimension of the \emph{codewords} is fixed as $k$, we use the notation $(n,\#\mathcal{C},d, k)_q$ and call $\mathcal{C}$ a \emph{constant dimension code} (CDC for short).  The maximal possible size of an $(n, M, d,k)_q$ CDC is often denoted by $A_q(n,d,k)$.

Subspace coding was first proposed by R. K\"oetter and F. R. Kschischang   in \cite{koetter2008coding}  to error control in random network coding.
The main problem in subspace coding is to determine the maximal possible size ${\bf A}_q(n,d,k)$,  which makes the subspace distance satisfies:
for any two different subspaces $U$ and $W$, we have $d(U,W)=2k-2\dim(U \cap W) \geq d$.

A plethora of  results on the construction of CDCs are invented in the literatures.  The lower and upper bounds on $A_q(n,d,k)$ have been in-depth studied in the last decade, see \cite{Heinlein2016Tables}.
The report \cite{Heinlein2016Tables}  describes  an on-line database, which we refer to the online tables http://subspacecodes.uni-bayreuth.de. These tables gather up-to-date information about the current lower and upper bounds for subspace codes.
Lifted \emph{maximum rank-distance} (MRD for short) codes are one type of building blocks of the echelon-Ferrers construction \cite{etzion2009error}. The idea of multilevel
construction is widely used, including parallel construction~\cite{XuChen2018}, coset construction \cite{Heinlein2015Coset}, pending dot~\cite{silberstein2014subspace},  etc.
The most powerful construction is the linkage construction \cite{Gluesing2017linkage} and its improved construction~\cite{heinlein2017asymptotic}. The linkage construction is improved by these new works \cite{Antonio2019Combining,He2019Construction,HeinleinGeneralized}.

Recently, Fagang Li combined the two methods of linkage construction and echelon-Ferrers to obtain some new lower bounds of CDCs.
In this letter we generalize this construction. Based on this, we further improved the construction by a greedy algorithm.

\section{the Combing Method in \cite{FagangLi2019Construction}}

Let $X$ be a $k$-dimensional subspace of $\Gr$. We can represent $X$ by the matrix in reduced
row echelon form $E(X)$, whose $k$ rows form a basis for $X$.
The \emph{identifying vector} of $X$, denoted by $v(X)$, is a binary vector of length $n$ and weight $k$, where the  $k$ \emph{ones} of $v(X)$ are exactly the pivots of $E(X)$.

The zeroes are removed from each row of $E(X)$, which lie on the left of the pivots. Then we delete the columns which exactly having the pivots.
After that, all the remaining entries are shifted to the right.  We finally get the Ferrers tableaux form of a subspace $X$, denoted by $\mathcal{F}(X)$.
The Ferrers diagram of $X$ can be obtained from $\mathcal{F}(X)$ by replacing the entries of $\mathcal{F}(X)$ with dots.

Let $F_q^{m\times \ell}$ be an $m\times \ell$  matrices space over the field $F_q$.
For any two distinct matrices $A,B\in F_q^{m\times \ell}$, the rank-metric is defined as
$d_R(A, B) := \rank(A-B)$.  A subset of $F_q^{m\times \ell}$ with the rank-metric is called a rank-metric code.
If a rank-metric code is a linear subspace of $F_q^{m\times \ell}$,  we can call it a linear rank-metric code.
It is clear that the rank-distance of a rank-metric code $\mathcal{C}$ can be defined as
$d_R(\mathcal{C}) := \min\{d_R(A,B): A,B \in \mathcal{C},A \neq B\}$.
It is well-known that the number of codewords in  $\mathcal{C}$ is upper bounded by
$q^{\max\{m,\ell\}\cdot(\min\{m,\ell\}-d+1)}$. A code attaining this bound is called
a maximum rank-distance (MRD) code.

Let $\mathcal{F}$ be a Ferrers diagram with $\ell$ dots in the top row and $m$ dots in the rightmost
column. If for any codeword $M$ of $\mathcal{C_F}$, all entries of $M$ not in $\mathcal{F}$ are zeroes,
a linear rank-metric code $\mathcal{C_F}$ of $F^{m\times \ell }_q$ is called a Ferrers diagram rank-metric (FDRM)
code. An FDRM code $\mathcal{C_F}$ is denoted an $[F,d, \delta]$
FDRM code, if $\rank(A)\ge d$ for any nonzero codeword $A$,
and $\dim(\mathcal{C_F}) = \delta$.

The following theorem determines an upper bound on the
size of $\dim(\mathcal{C_F})$.

\begin{theorem} (see \cite{etzion2009error}) \label{thm-upper-bound-ef}
  Let $\mathcal{F}$ be the Ferrers diagram of $\ell$ in the top row and $m$ dots in the rightmost column.
  Let $\mathcal{C_\mathcal{F}} \subseteq F_q^{m\times \ell}$  be the corresponding FDRM code fulfilling $\forall A, B\in \mathcal{C_\mathcal{F}}$, $\rank(A-B)\ge \delta$.
  Then $|\mathcal{C_\mathcal{F}}| \le q^{\min_i\{w_i\}}$, where $w_i$ is the number of dots in $\mathcal{F}$,
  which are neither contained in the rightmost $\delta-1-i$ columns nor contained  in the first $i$ rows  for $0\le i\le \delta-1$.
\end{theorem}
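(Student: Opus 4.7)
The plan is to prove the bound by a puncturing argument, applied once for each index $i \in \{0, 1, \ldots, \delta-1\}$. For a fixed $i$, I consider the $\mathbb{F}_q$-linear restriction map that sends each codeword $M \in \mathcal{C_F}$ to the tuple of its entries at those dots of $\mathcal{F}$ which lie neither in the first $i$ rows nor in the rightmost $\delta-1-i$ columns. By the definition of $w_i$, there are exactly $w_i$ such dots, so this map takes values in $\mathbb{F}_q^{w_i}$. If I can show the map is injective, then $|\mathcal{C_F}| \le q^{w_i}$, and minimising over $i$ produces the claimed bound.

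Injectivity reduces to showing that the only codeword whose $w_i$ retained entries are all zero is the zero codeword. Let $M$ be such a codeword; then the nonzero entries of $M$ are confined to the union of the first $i$ rows with the rightmost $\delta-1-i$ columns. Decomposing $M = M_1 + M_2$ with $M_1$ supported on those $i$ rows and $M_2$ on those $\delta-1-i$ columns gives $\rank(M_1) \le i$ and $\rank(M_2) \le \delta-1-i$, so by subadditivity of rank, $\rank(M) \le i + (\delta-1-i) = \delta-1$. Since $\mathcal{C_F}$ has minimum rank-distance at least $\delta$, the codeword $M$ must be the zero matrix, establishing injectivity.

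I do not expect a real obstacle here. The rank decomposition step is the only place where the minimum-distance hypothesis is used, and it is essentially an application of the subadditivity $\rank(A+B) \le \rank(A) + \rank(B)$ together with the trivial observations that a matrix with nonzero entries in at most $i$ rows has rank at most $i$, and dually for columns. The remaining work is purely bookkeeping, and the extreme cases $i = 0$ (no rows deleted) and $i = \delta-1$ (no columns deleted) are covered by the same formula without modification.
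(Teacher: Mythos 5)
Your proof is correct: the puncturing map is injective because any codeword vanishing on the retained $w_i$ dots has support inside the union of the first $i$ rows and the last $\delta-1-i$ columns, hence rank at most $\delta-1$, forcing it to be zero. The paper itself does not prove this statement (it is quoted from Etzion and Silberstein), and your argument is exactly the standard projection proof given in that reference, so there is nothing to correct beyond the trivial bookkeeping point that entries lying in both the first $i$ rows and the last $\delta-1-i$ columns must be assigned to only one of $M_1$, $M_2$.
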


Furthermore, the authors of \cite{etzion2009error} proved that the upper bound can be attained when $d=2,4$,
and then conjectured that the upper bound is also tight for other cases.

For simplify, for any given  matrix $M\in F^{k\times \ell}_q$ over $F_q$, the row space of $M$ is denoted by $\im(M)$.

We recall some basic notations of linkage in \cite{Gluesing2017linkage}.
A set $\mathcal{U} \subset F_q^{k\times n}$ with the size $k \times n$ matrices over $F_q$ is called an SC-representation set
if $\rank(U)=k$ for all $U \in \mathcal{U}$ and $\im(U_1) \neq \im(U_2)$ for all $U_1 \neq U_2$ in $\mathcal{U}$.

\begin{proposition}(see \cite{Gluesing2017linkage}).
Let $U$ be an SC-representation  set of a $(n_1, N_1, d_1, k)_q$ constant dimension subspace code
and $\mathcal{M} \subset F_q^{k \times n_2}$  be a linear rank-metric code with distance $d_2$ and $N_2$ elements.
Consider the set of $k$ dimension subspaces in ${\bf F}_q^{n_1+n_2}$ defined by ${C}=\{\im(U|M): U \in  \mathcal{U}, M \in  \mathcal{M}\}$.
This is an $(n_1+n_2, N_1N_2, \min\{d_1,2d_2\}, k)_q$ constant dimension code. Here $(U|M)$ is a $k \times (n_1+n_2)$ matrix concatenated from $U$ and $Q$.
\end{proposition}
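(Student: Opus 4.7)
The plan is to check three things: (i) each $\im(U|M)$ really is a $k$-dimensional subspace; (ii) the map $(U,M)\mapsto\im(U|M)$ is injective, so that $|C|=N_1N_2$; and (iii) the subspace distance between any two distinct codewords is at least $\min\{d_1,2d_2\}$.

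For (i), I would simply observe that $(U|M)\in F_q^{k\times(n_1+n_2)}$ has its first $n_1$ columns forming the matrix $U$ of rank $k$, so $\rank(U|M)=k$ and $\im(U|M)\in\mathcal{G}_q(k,n_1+n_2)$. For (ii), assume $\im(U_1|M_1)=\im(U_2|M_2)$. Row-space equality for full-row-rank matrices gives an invertible $A\in GL_k(F_q)$ with $A(U_1|M_1)=(U_2|M_2)$, hence $AU_1=U_2$ and $AM_1=M_2$. Then $\im(U_1)=\im(AU_1)=\im(U_2)$, and the SC-representation hypothesis forces $U_1=U_2$; since $U_1$ has full row rank, $AU_1=U_1$ forces $A=I$, so $M_1=M_2$.

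For (iii), the key identity I would use is that for $k\times n$ matrices $X,Y$ of rank $k$,
\[
d_S(\im(X),\im(Y))=2\rank\binom{X}{Y}-2k.
\]
I would then split into two cases. In Case A ($U_1\neq U_2$), deleting the last $n_2$ columns can only decrease the rank, so
\[
\rank\binom{U_1\ M_1}{U_2\ M_2}\ \ge\ \rank\binom{U_1}{U_2},
\]
which, after subtracting $k$ and doubling, gives $d_S\ge d_S(\im(U_1),\im(U_2))\ge d_1$. In Case B ($U_1=U_2$, $M_1\neq M_2$), elementary row operations yield
\[
\rank\binom{U_1\ M_1}{U_1\ M_2}=\rank\binom{U_1\ M_1}{0\ \ M_2-M_1}=k+\rank(M_2-M_1),
\]
because $U_1$ has full row rank $k$ and the second block row contributes $\rank(M_2-M_1)$ extra independent rows. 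Hence $d_S=2\rank(M_2-M_1)\ge 2d_2$. Combining both cases gives $d_S\ge\min\{d_1,2d_2\}$, as required.

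The only delicate point is the rank inequality in Case A: one must be careful that it is column deletion (not row deletion) that preserves rank as a lower bound on the wider matrix. Everything else is either the SC-representation property or a routine row-reduction argument, so I expect the entire proof to be short.
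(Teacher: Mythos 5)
Your proof is correct and complete. The paper itself states this proposition without proof (it is quoted from \cite{Gluesing2017linkage}), so there is no in-paper argument to compare against; but your route — expressing $d_S$ via $2\rank\binom{X}{Y}-2k$, splitting into the cases $U_1\neq U_2$ and $U_1=U_2$, and bounding that rank by column deletion in the first case and by row reduction to $k+\rank(M_2-M_1)$ in the second — is the standard linkage argument and is precisely the stacked-matrix rank technique the paper itself employs in the proof of its Theorem~3.
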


We quote the following theorem (Theorem 3.1 in \cite{FagangLi2019Construction}) to briefly describe the construction method,  see the paper \cite{FagangLi2019Construction} for details.

\begin{theorem} \label{them-combining} \cite{FagangLi2019Construction}
Let $n_1 > k,\ n_2 > k,\ k\ge d$.
For $i=1,2$,
let $\mathcal{U}_i \subseteq F_q^{k\times n_i}$ be SC-representing sets with cardinality $N_i$, and $d_S(\mathcal{U}_i)=d$.
Assume that $\mathcal{C}_R  \subseteq F^{k\times n_2}_q$ is a linear rank-metric code with  $|\mathcal{C}_R| = N_R$ and $d_R(\mathcal{C}_R) = \frac{d}{2}$.

Let the identifying
vectors $v_j$ with length $n := n_1 + n_2$ and weight $k$ satisfy the
following properties for $j=1,2,\cdots$.

(a) For each identifying vector  $v_j$, the count of $1$'s in the first $n_1$ positions and the last $n_2$ positions are both greater than or equal to $\frac{d}{2}$.

(b) For any two distinct identifying vectors $v_{j_1}$ and $v_{j_2}$,  the Hamming distance $H(v_{j_1},v_{j_2})\ge d.$

Let $\mathcal{C}_{\mathcal{F}_j} \subseteq F^{k\times (n-k)}_q$ be an FDRM code and $d_R(\mathcal{C}_{\mathcal{F}_j}) = \frac{d}{2}$,
where $\mathcal{F}_j$ is a Ferrers diagram corresponding to the identifying vector $v_j$.

Denote by $C$ the subspace code of length $n = n_1 + n_2$ as $C = C_1 \cup C_2 \cup C_3$,
where

$C_1 = \{\im(U|M)\ |\ U \in \mathcal{U}_1, M\in \mathcal{C}_R\}$;

$C_2 = \{\im(\mathbf{0}_{k\times n_1}|U) \ |\ U \in \mathcal{U}_2\}$;

$C_3=\cup_j \mathbb{C}_{\mathcal{F}_j}$, $\mathbb{C}_{\mathcal{F}_j}$ is the lifted FDRM code of $\mathcal{C}_{\mathcal{F}_j}$.

Thus, $C$ is an $(n,N, d, k)_q$ CDC with $N = N_2 + N_1\cdot N_R + \sum_j |\mathbb{C}_{\mathcal{F}_j}|$.

\end{theorem}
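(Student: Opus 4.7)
The plan is to verify the three defining properties of an $(n, N, d, k)_q$ CDC: every element of $C$ is a $k$-dimensional subspace of $\F{n}{q}$, the three families $C_1, C_2, C_3$ are pairwise disjoint and sum to the claimed cardinality, and the minimum subspace distance is at least $d$. Dimensionality is immediate: $\rank(U|M) = \rank(U) = k$ for $(U, M) \in \mathcal{U}_1 \times \mathcal{C}_R$, $\rank(\mathbf{0}_{k\times n_1}|U) = k$ for $U \in \mathcal{U}_2$, and lifted FDRM codewords are $k$-dimensional by construction. Disjointness, and hence the count $N = N_2 + N_1 N_R + \sum_j |\mathbb{C}_{\mathcal{F}_j}|$, follows by inspecting identifying vectors: $C_1$-codewords have all $k$ pivots in the first $n_1$ columns (since $U$ already has rank $k$), $C_2$-codewords have all $k$ pivots in the last $n_2$ columns, and each $v_j$ used for $C_3$ has at least $d/2\ge 1$ ones in both halves by hypothesis (a), so the three identifier patterns never coincide across families.

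The distance verification splits into six cases. Within $C_1$, I would invoke the quoted Linkage Proposition with $d_1 = d$ and $d_2 = d/2$, obtaining $d_S \ge \min(d, 2\cdot d/2) = d$. Within $C_2$, distinct $\im(\mathbf{0}|U_1)$ and $\im(\mathbf{0}|U_2)$ inherit the subspace distance of $\im(U_1),\im(U_2)$ inside the ambient Grassmannian, which is $\ge d$ by the assumption on $\mathcal{U}_2$. Within $C_3$, two codewords sharing an identifying vector $v_j$ differ by a nonzero element of the lifted FDRM code and so have $d_S = 2 d_R(\mathcal{C}_{\mathcal{F}_j}) = d$, while any two with distinct $v_{j_1}, v_{j_2}$ satisfy $d_S \ge H(v_{j_1},v_{j_2}) \ge d$ by (b) combined with the standard bound $d_S \ge H$ for lifted codewords. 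Across $C_1$ and $C_2$, any vector in the intersection would force a linear dependence among the rows of a full-rank $U \in \mathcal{U}_1$, so the two subspaces meet trivially and $d_S = 2k \ge 2d$ by the hypothesis $k \ge d$. Across $C_1$ versus $C_3$ and $C_2$ versus $C_3$, I would bound $H$ directly: writing $a$ and $k-a$ for the weights of $v_j$ in the two halves, hypothesis (a) gives $a, k-a \ge d/2$, and a symmetric-difference estimate applied separately in each half of lengths $n_1$ and $n_2$ yields $H \ge 2(k-a) \ge d$ and $H \ge 2a \ge d$ respectively.

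The main obstacle is the bookkeeping in the two cross cases involving $C_3$: one must carefully track the weight split of every identifying vector across the two halves, apply the symmetric-difference estimate separately in each half, and invoke $d_S \ge H$ to transfer the Hamming bound to the subspace distance. Beyond this, no new ideas past the linkage and echelon-Ferrers frameworks already cited in the paper are required, and the theorem follows by assembling the six cases.
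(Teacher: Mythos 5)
Your proof is correct, but note that the paper itself offers no proof of this statement: the theorem is quoted from the cited reference and the reader is sent there for details, so the only argument in the paper to compare against is the proof of its generalization, Theorem~\ref{them-general}. Your six-case decomposition all checks out: the three within-family cases follow from the linkage proposition, from $d_S(\mathcal{U}_2)=d$, and from the standard echelon--Ferrers argument; the $C_1$ versus $C_2$ case gives trivial intersection (since $xU=0$ forces $x=0$) and hence distance $2k\ge 2d$; and the two cross cases involving $C_3$ follow from the Etzion--Silberstein bound $d_S(X,Y)\ge H(v(X),v(Y))$ -- which holds for arbitrary subspaces, not only lifted ones, as your argument requires -- combined with the half-by-half weight estimates $H\ge 2(k-a)\ge d$ and $H\ge 2a\ge d$ guaranteed by hypothesis (a). The genuine point of comparison is these cross cases: in proving Theorem~\ref{them-general} the paper does not use the Hamming-distance lemma at all, but instead stacks $(U|M)$ on top of the reduced-row-echelon pattern of $v_j$ and bounds the rank of the resulting $2k\times n$ matrix below by $k+\frac{d}{2}$, exploiting the zero block beneath $U$ in the first $n_1$ columns. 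Your identifying-vector route is equally valid and arguably cleaner here precisely because hypothesis (a) constrains both halves of $v_j$; the rank computation is what remains available when, as in Theorem~\ref{them-general}, the constraint on the first $n_1$ positions is dropped. Two cosmetic points only: within $C_3$ with a common identifying vector the distance is $2\rank(A-B)\ge 2d_R=d$ rather than exactly $d$, and your disjointness argument implicitly uses $d\ge 2$, which is automatic since subspace distances between distinct equidimensional subspaces are positive even integers.
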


This construction modifies the echelon-Ferrers construction, which replaces the lifted MRD code $\im(I_k, C_{R_1})$ with the linkage construction $\im(U, C_{R_2})$,
where $C_{R_1}$ and $C_{R_2}$ are linear rank-metric codes with size $k\times (n-k)$ and $k\times n_2$, respectively.

\section{Construction And Algorithm}
In this section, we're going to give the details of our construction and an algorithm with greedy strategy.

\subsection{General Construction}
We now generalize the multilevel construction and linkage construction.

\begin{theorem}\label{them-general}
Let $n_1 \ge k,\ n_2 \ge k,\ \mathcal{U}_1 \subseteq F_q^{k\times n_1}$ be SC-representing sets with cardinality $N_1$, and $d_S(\mathcal{U}_1)=d$.
Assume that $\mathcal{C}_R  \subseteq F^{k\times n_2}_q$ is a linear rank-metric code with $|\mathcal{C}_R| = N_R$ and $d_R(\mathcal{C}_R) = \frac{d}{2}$.
Let the identifying vector $v_j$ with length $n := n_1 + n_2$ and weight $k$ satisfy the
following properties for $j=1,2,\cdots$.

(a) For any identifying vector $v_j$, the count of $1$'s in the last $n_2$ positions is at least $\frac{d}{2}$.

(b)  For any two distinct identifying vectors $v_{j_1}$ and $v_{j_2}$,  the Hamming distance $H(v_{j_1},v_{j_2})\ge d.$

Let $\mathcal{C}_{\mathcal{F}_j} \subseteq F^{k\times (n-k)}_q$ be an FDRM code and $d_R(\mathcal{C}_{\mathcal{F}_j}) = \frac{d}{2}$,
where $\mathcal{F}_j$ is a Ferrers diagram corresponding to the identifying vector $v_j$.

Denote by $C$ the subspace code of length $n = n_1 + n_2$ as $C = C_1 \cup C_2 $,
where

$C_1 = \{\im(U|M)\ |\ U \in \mathcal{U}_1, M\in \mathcal{C}_R\}$;

$C_2=\cup_j \mathbb{C}_{\mathcal{F}_j}$, $\mathbb{C}_{\mathcal{F}_j}$ is the lifted FDRM code of $\mathcal{C}_{\mathcal{F}_j}$.

Thus, $C$ is an $(n,N,d,k)_q$ CDC with $N = N_1\cdot N_R + \sum_j |\mathbb{C}_{\mathcal{F}_j}|$.

\end{theorem}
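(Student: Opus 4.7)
The plan is to split the verification of $d_S(C)\geq d$ into three cases -- within $C_1$, within $C_2$, and between $C_1$ and $C_2$ -- and to address the size count via an identifying-vector disjointness argument. For the within-$C_1$ case I would simply invoke the linkage proposition above: since $C_1$ is the linkage construction applied to $\mathcal{U}_1$ (with $d_S(\mathcal{U}_1)=d$) and $\mathcal{C}_R$ (with $d_R(\mathcal{C}_R)=d/2$), the result is an $(n,N_1N_R,d,k)_q$ CDC with no extra work. For the within-$C_2$ case I would run the standard echelon-Ferrers argument: two codewords sharing the same identifying vector $v_j$ come from distinct elements of $\mathcal{C}_{\mathcal{F}_j}$ whose difference has rank $\geq d/2$, so their liftings are at subspace distance $\geq d$; two codewords with distinct identifying vectors $v_j$ and $v_{j'}$ satisfy $d_S\geq d_H(v_j,v_{j'})\geq d$ by property (b) combined with the standard identifying-vector bound $d_S\geq d_H$.

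The main new step, and the one I expect to require the most care, is the cross distance. I would fix $X=\im(U|M)\in C_1$ with $U\in\mathcal{U}_1$, $M\in\mathcal{C}_R$, and $Y\in\mathbb{C}_{\mathcal{F}_j}\subseteq C_2$, and first establish that the identifying vector $v(X)$ is supported entirely in the first $n_1$ coordinates. Writing $E(U)=GU$ for the invertible $G\in F_q^{k\times k}$ realizing row reduction of $U$, the matrix $G(U|M)=(E(U)\,|\,GM)$ is already in RREF -- its leading block $E(U)$ is, and $\rank(U)=k$ forces all $k$ pivots into the first $n_1$ columns -- so $v(X)=v(U)\subseteq\{1,\ldots,n_1\}$. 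Combined with property (a), this yields $|v(X)\cap v_j|\leq k-d/2$, whence
\[
d_H(v(X),v_j) \;=\; 2k - 2\,|v(X)\cap v_j| \;\geq\; d,
\]
and the standard bound $d_S(X,Y)\geq d_H(v(X),v(Y))$ (a consequence of $v(X)\cup v(Y)\subseteq v(X+Y)$, which forces $\dim(X+Y)\geq|v(X)\cup v(Y)|$) finishes the cross case.

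Finally, the same support distinction gives $C_1\cap C_2=\emptyset$ -- $C_1$-codewords have zero weight on the last $n_2$ coordinates of their identifying vectors, while $C_2$-codewords carry at least $d/2\geq 1$ ones there -- so $|C|=|C_1|+|C_2|=N_1N_R+\sum_j|\mathbb{C}_{\mathcal{F}_j}|$, as required. The hardest part will be cleanly justifying the support claim for linkage codewords; everything else is either a direct invocation of earlier results or routine Hamming-weight bookkeeping.
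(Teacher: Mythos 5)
Your proposal is correct and follows essentially the same route as the paper: the decisive observation in both is that $\rank(U)=k$ confines all $k$ pivots of $\im(U|M)$ to the first $n_1$ coordinates, while condition (a) places at least $\frac{d}{2}$ pivots of each $v_j$ in the last $n_2$ coordinates. The only difference is presentational: you finish the cross case via the standard identifying-vector bound $d_S(X,Y)\ge d_H(v(X),v(Y))$, whereas the paper stacks $(U|M)$ on the reduced row echelon form of the $C_2$ codeword and directly bounds the rank of the resulting matrix below by $k+\frac{d}{2}$.
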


\begin{proof}
We note that $C_1$ is an $(n_1+n_2, N_1N_R, d, k)_q$ constant dimension code, and $C_2$ is the set of the lifted FDRM code.
Consider that the pivots of  $C_1$ and  $C_2$ are pairwise disjoint, therefore, the cardinality of the code is  $N_1\cdot N_R + \sum_j |\mathbb{C}_{\mathcal{F}_j}|$.

According to the definition, $\mathbb{C}_{\mathcal{F}_j}$ is a CDC with $d_S(C_{\mathcal{F}_j})\ge d$.
Hence, it is sufficient to prove that for any $w_1\in C_1$, $w_2\in C_2$, $d_S(w_1, w_2)\ge d$.
In light of the definition of  subspace distance $d_S$ as mentioned before, it is equivalent to prove  that $\dim(\im(w_1)+ \im(w_2)) \ge k+\frac{d}{2}$.

For any identifying vector $v_j\ (j=1,2,\cdots)$,  we note that this vector has $\frac{d}{2}$ ones in the last $n_2$ positions,
and can be illustrated  in reduced row echelon form as follows:

$$Z_1:=
\left(
    \begin{array}{cccccccc}
      Z_{11} & Z_{12}  \\
      Z_{21} & Z_{22}  \\
      \end{array}
  \right)_{k\times n},
$$

where $Z_{11}$ is a matrix with the size of  $(k-\frac{d}{2})\times n_1$, $Z_{12}$ is a matrix with the size of $(k-\frac{d}{2})\times n_2$, $Z_{21}$ is a zero matrix with the size of ${\frac{d}{2}\times n_1}$,  $Z_{22}$ is a matrix with the size of  $\frac{d}{2}\times n_2$, and $Z_{22}$ contains at least $\frac{d}{2}$ pivots.

It is clear that  $\im(w_1)=\im(U|M)$, $U\in \mathcal{U}_1, M\in \mathcal{M}$, then
$$
Z_2:=\left(
    \begin{array}{cccccccc}
      U_{k\times n_1}  & M_{k\times n_2} \\
      Z_{11} & Z_{12}  \\
      Z_{21} & Z_{22}  \\
      \end{array}
  \right)_{2k\times n}.
$$

Notice that the rank of $U$ is $k$, $\rank(Z_{22}) \ge \frac{d}{2}$, and $Z_{21}= \mathbf{0}_{\frac{d}{2}\times n_2}$.
Hence, the rank of the  matrix $Z_2$ is at least $k+\frac{d}{2}$.
Here we finish the proof of the theorem.

\end{proof}

Remark: Under careful comparison, we can find that the construction in Theorem \ref{them-general} differs from the one in Theorem \ref{them-combining}
in that the restriction of $k\ge d$ is removed, and the condition ($a$) is relaxed,  sacrificing  the code $C_2$  in Theorem \ref{them-combining}.
The size of code $C_2$ contains only small codewords, and the candidate set of identifying vectors increases greatly, which will finally add more code into $\cup_j \mathbb{C}_{\mathcal{F}_j}$.  These findings are verified by the greedy algorithm.

\subsection{Greedy Algorithm}
Due to the limitation of data scale, the optimal echelon-Ferrers is difficult to operate effectively.
Therefore, we employ an algorithm with greedy strategy, which is illustrated detailedly in Algorithm~\ref{alg:Framwork}.
\begin{algorithm}[htb]
\caption{Greedy()\label{alg:Framwork}}
\begin{algorithmic}[1]
\Require
$n_1,\ n_2,\ d,\ k$
\Ensure
target identifying vector set $S_v$
\State construct $V_{set}$: a number of $\sum_{\Delta=0}^{k-\frac{d}{2}}\binom{n_1}{k-\frac{d}{2}-\Delta}\times \binom{n_2}{\frac{d}{2}+\Delta}$ identifying vectors
\State compute corresponding dimensions for each vector in $V_{set}$
\State sort $V_{set}$ in descending order by the dimension values
\State pick up the first vector of the sorted $V_{set}$ and put it to $S_v$
\For {$i= \text{maxdimension}-1$ down to $0$}
\State  $iSet$: compatible vectors with dimension $i$ in $V_{set}$
\State  choose $v$ from $iSet$ under the greedy criteria:
        it has minimum distance to the latest vector in $S_v$
\State  pick $v$ out from $iSet$ and put it to $S_v$
\State  repeat Step $7$ and $8$ until there's no more such $v$
\EndFor
\end{algorithmic}
\end{algorithm}

The greedy algorithm operates by selecting identifying vectors and adding them to the target set $S_v$.
Firstly, a total number of $\sum_{\Delta=0}^{k-\frac{d}{2}}\binom{n_1}{k-\frac{d}{2}-\Delta}\times \binom{n_2}{\frac{d}{2}+\Delta}$ identifying vectors are added to $V_{set}$.
For all the vectors in $V_{set}$, we compute their corresponding dimensions by Theorem~\ref{thm-upper-bound-ef}, and sort them in descending order according to the value of dimensions.
The target set $S_v$ is empty initially, and the first vector (with maximum dimension) of the sorted $V_{set}$ is put into $S_v$.
Then the loop step runs from the second maximum dimension down to dimension $0$.
For each round with dimension $i$,
we denote by $iSet$ as the set of vectors with dimension $i$ in $V_{set}$ and compatible to $S_v$,
i.e., $$iSet = \{ v\in V_{set}\ |\ \dim(v)=i,\ d_H(v, s_j)\ge d,\ \forall \ s_j \in S_v\}.$$
Now, we select vectors from $iSet$ and add them to $S_v$ one by one.
In order to add vectors as more as possible,
a greedy strategy is employed:
the vector $v$ has the minimum Hamming distance to the latest vector in $S_v$.
Then vector $v$ is picked out from $iSet$ and added to $S_v$.
This process will continue until no more such vector $v$ can be found to add to $S_v$.
%
%
%
In some case of $i$, maybe the $iSet$ is an empty set.
The total cost of the algorithm is bounded by $O(m\cdot\log m)$, where $m$ equals $\sum_{\Delta=0}^{k-\frac{d}{2}}\binom{n_1}{k-\frac{d}{2}-\Delta}\times \binom{n_2}{\frac{d}{2}+\Delta}$.

~\\
\noindent \textbf{Example $1 \quad$}
In order to apply Theorem~\ref{them-general} for $A_q(12, 4, 4)$, we can choose  $n_1=8$ and $n_2=4$.
By applying the echelon-Ferrers construction, a number of $25$ identifying vectors are obtained.
We list all the obtained identifying vectors in descending order according to their dimension values in Table~\ref{tab-12-4-4}.
It is known that $A_q(8, 4, 4) \ge  q^{12} + q^2(q^2 + 1)2(q^2 + q + 1) + 1$.
Then we have
$A_q(12,4,4)\ge q^{12}(q^{12} +(q^2 +q +1)(q^2 +1)^2(q^4 +1)+(q^{12}+2q^{10}+3q^9+5q^8+q^7+2q^6+q^5+7q^4+q^3+q^2+1).$
When $q = 2$, we obtain $A_2(12,4,4)\ge 19674269$, which is an improvement of the corresponding results in \cite{FagangLi2019Construction,XuChen2018}.
However, it is still weaker than the paper \cite{Antonio2019Combining,He2019Construction,HeinleinGeneralized}.

\begin{table}[tb]
\caption{ Construction for $A_q(12,4,4)$}\label{tab-12-4-4}
\begin{tabular}{|c|c|c||c|c|c|}\hline
  & Identifying Vector  &  Dim  &  & Identifying Vector  &  Dim  \\ \hline
  \hline
1 & 110000001100 & 12 & 14 & 001100000011 & 6  \\ \hline
2 & 101000001010 & 10 & 15 & 000001101001 & 5  \\ \hline
3 & 001100001100 & 10 & 16 & 000010011001 & 4  \\ \hline
4 & 011000001001 & 9 & 17 & 000000111100 & 4  \\ \hline
5 & 011000000110 & 9 & 18 & 000011000011 & 4  \\ \hline
6 & 010100001010 & 9 & 19 & 000010010110 & 4  \\ \hline
7 & 110000000011 & 8 & 20 & 000010100101 & 4  \\ \hline
8 & 101000000101 & 8 & 21 & 000001100110 & 4  \\ \hline
9 & 100100001001 & 8 & 22 & 000001011010 & 4  \\ \hline
10 & 100100000110 & 8 & 23 & 000001010101 & 3  \\ \hline
11 & 000011001100 & 8 & 24 & 000000110011 & 2  \\ \hline
12 & 010100000101 & 7 & 25 & 000000001111 & 0  \\ \hline
13 & 000010101010 & 6 &  &  &   \\ \hline
\end{tabular}
\end{table}


\subsection{Examples}

In this section, we give several examples constructed by our methods, and in the meanwhile the expressions of these bounds are also given.


\begin{table*}[htbp]
\centering
\caption{\label{tab-n-4-4}New subspace codes on $A_q(n,4,4)$}
\begin{tabular}{|c|c|c|}\hline
    ${\bf A}_q(n,d,k)$　&　New 　&　Old \\ \hline
  \hline
$A_2(13,4,4)$ &157396313    & 157332190   \\ \hline
$A_3(13,4,4)$ &7793514240823    &  7793495430036  \\ \hline
$A_4(13,4,4)$ &18118665490931521   &               18118664249474716  \\ \hline
$A_5(13,4,4)$ &7466568820575245751   &          7466568787180077320   \\ \hline
$A_7(13,4,4)$ &65745512221518213208951   &   65745512216555289614188  \\ \hline
$A_8(13,4,4)$ &2418546150658513179095553  & 2418546150622126921477496  \\ \hline
$A_9(13,4,4)$ &58159941504105053602711351 & 58159941503893673245551936   \\ \hline
$A_2(14,4,4)$ & 1259180741   &1258757174   \\ \hline
$A_3(14,4,4)$ & 210424885305967   &210424421624298    \\ \hline
$A_4(14,4,4)$ & 1159594591440676369   &               1159594516050838620  \\ \hline
$A_5(14,4,4)$ & 933321102572187066901   &          933321098538702991570   \\ \hline
$A_7(14,4,4)$ & 22550710691980761970230475   &   22550710690309028764671498  \\ \hline
$A_8(14,4,4)$ & 1238295629137158820126564417  & 1238295629118788686643907448  \\ \hline
$A_9(14,4,4)$ & 42398597356492584370649345569&42398597356340204444957848530   \\ \hline
$A_2(15,4,4)$ & 10073479745    & 10071464646   \\ \hline
$A_3(15,4,4)$ & 5681471907063670    & 5681463153275925   \\ \hline
$A_4(15,4,4)$ & 74214053852327765337   &               74214050169101548368  \\ \hline
$A_5(15,4,4)$ & 116665137821525349488286   &          116665137415279661027650   \\ \hline
$A_7(15,4,4)$ & 7734893767349401489942798302   &   7734893766857015258769289566  \\ \hline
$A_8(15,4,4)$ & 634007362118225316632582459985  & 634007362109986775858834010688  \\ \hline
$A_9(15,4,4)$ & 30908577472883094009455515497142&  30908577472784286989399940957138   \\ \hline
\end{tabular}
\end{table*}


1) $d\ge k$

Let $n_1=8,\ n_2=5$, apply the algorithm, we have $A_q(13,4,4)\ge A_q(8,4,4)\times  q^{15}+q^{15}+2q^{13}+3q^{12}+5q^{11}+q^{10}+3q^9+6q^8+7q^7+5q^6+3q^5+3q^4+q^3+1$.
When $q=2$, there's $A_2(13,4,4)\ge 157396313$.  This bound is strictly improves upon the corresponding results in  \cite{Antonio2019Combining,ChenHeWengXu2020,heinlein2017asymptotic,XuChen2018,HeinleinGeneralized,He2019Construction}.

Let $n_1=8,\ n_2=6$, we have $A_q(14,4,4)\ge A_q(8,4,4)\times  q^{18}+q^{18}+2q^{16}+3q^{15}+5q^{14}+q^{13}+4q^{12}+6q^{11}+10q^{10}+8q^9+8q^8+4q^7+2q^6+q^5+2q^4+q^2+1$.
When $q=2$, there's $A_2(14,4,4)\ge 1259180741$, which exceeds the current best bound 1258757174.

Let $n_1=8,\ n_2=7$, we have $A_q(15,4,4)\ge A_q(8,4,4)\times  q^{21}+q^{21}+2q^{19}+3q^{18}+5q^{17}+q^{16}+4q^{15}+6q^{14}+11q^{13}+10q^{12}+12q^{11}+9q^{10}+8q^9+4q^8+3q^7+2q^6+q^5+q^4+q^3+q^2+2q+1$.
When $q=2$, there's $A_2(15,4,4)\ge 10073479745$, which exceeds the current best bound 10071464646.

Let $n_1=8,\ n_2=8$, we have $A_q(16,4,4)\ge A_q(8,4,4)\times q^{24}+q^{24}+2q^{22}+3q^{21}+5q^{20}+q^{19}+4q^{18}+6q^{17}+11q^{16}+11q^{15}+11q^{14}+11q^{13}+15q^{12}+6q^{11}+5q^{10}+4q^{9}+ 5q^8+q^7+2q^6+q^5+7q^4+q^3+q^2+1$.
When $q=2$, there's $A_2(16,4,4)\ge  80587907742$, while  the current best bound is 80590267742 in the paper \cite{He2019Construction,HeinleinGeneralized,Antonio2019Combining}.

Let $n_1=8,\ n_2=9$, we have $A_q(17,4,4)\ge A_q(8,4,4)\times  q^{27}+
q^{27}+2q^{25}+3q^{24}+5q^{23}+q^{22}+4q^{21}+6q^{20}+11q^{19}+11q^{18}+15q^{17}
+13q^{16}+12q^{15}+8q^{14}+6q^{13}+6q^{12}+7q^{11}+q^{10}+2q^{9}+5q^8+4q^7+q^6+q^4+q$.
When $q=2$, there's $A_2(17,4,4)\ge 644703872849$,  while  the current best bound is 644711939518.

Let $n_1=8,\ n_2=10$, we have $A_q(18,4,4)\ge A_q(8,4,4)\times  q^{30}+q^{30}+2q^{28}+3q^{27}+5q^{26}+q^{25}+4q^{24}+6q^{23}+11q^{22}+11q^{21}+14q^{20}+15q^{19}
+14q^{18}+6q^{17}+5q^{16}+5q^{15}+9q^{14}+2q^{13}+ 6q^{12}+5q^{11}+5q^{10}+2q^9+3q^8
+2q^6+q^5+2q^4+q^2+1$.
When $q=2$, there's $A_2(18,4,4)\ge 5157631206341$,  while  the current best bound is 5157723124262.

Let $n_1=8, n_2=11$, we have  $A_q(19,4,4)\ge A_q(8,4,4)\times  q^{33}+q^{33}+2q^{31}+3q^{30}+5q^{29}+q^{28}+4q^{27}+6q^{26}+11q^{25}+11q^{24}+14q^{23}+13q^{22}+14q^{21}+5q^{20}+6q^{19}
+6q^{18}+8q^{17}+4q^{16}+8q^{15}+8q^{14}+7q^{13}+3q^{12}+3q^{11}+4q^{10}+2q^9+3q^8+2q^7+2q^6+q^5+q^4+q^3+q^2+2q+1$.
When $q=2$, there's $A_2(19,4,4)\ge 41261041141953$, while  the current best bound is 41261547000158.

All the upon improvements are listed and compared in Table~\ref{tab-n-4-4}.

2) $k>d$

Let $n_1=7, n_2=3$, we have  $A_q(10,4,3)\ge A_q(7,4,3)\times  q^{6}+q^{2}+q+1$.
When $q=2$, $A_2(10,4,3)\ge 21319$, while  the current best bound is 21319 \cite{sascha2019note,HeinleinGeneralized}.

Let $n_1=7, n_2=4$, we have  $A_q(11,4,3)\ge A_q(7,4,3)\times  q^{8}+q^{4}+q^3+2q^2+q+1$.
When $q=2$, $A_2(11,4,3)\ge 85283$, while  the current best bound is 85283 \cite{sascha2019note,HeinleinGeneralized}.

Let $n_1=7, n_2=5$, we have  $A_q(12,4,3)\ge A_q(7,4,3)\times  q^{10}+q^{6}+q^5+2q^4+2q^3+2q^2+q+1$.
When $q=2$, $A_2(12,4,3)\ge 341147$, while  the current best bound is 383111 \cite{sascha2019note,HeinleinGeneralized}.
When set $n_1=9, n_2=3, q=2$, we have the same bound 38311.

When $n$ varies from 13 to 16, and $q$ in the set \{2,3,4,5,7,8,9\}, we have the similar bounds to the results in the paper \cite{sascha2019note,HeinleinGeneralized}.

\section{Conclusion}
A construction for constant dimension code is presented in this letter, and new lower bounds of the sizes of constant dimension codes  $A_q(n,d,k)$ are also given.
This construction gives an improved bounds for the linkage construction with echelon-Ferrers.
The results of these lower bounds in \cite{FagangLi2019Construction} are not the best, and our construction generalize the construction.
With the help of the greedy algorithm, we have improved at least the following lower bounds: $A_q(13,4,4), A_q(14,4,4), A_q(15,4,4)$ (listed in Table~\ref{tab-n-4-4}), the expression of these bounds are also given.
All these bounds exceeds the bounds presented in \cite{FagangLi2019Construction}.
Moreover, the identifying vectors underlying the improved bounds are listed in the appendix.


\section*{Appendix}

Here, we list the identifying vectors underlying the improved lower bounds of $A_q(13,4,4), A_q(14,4,4), A_q(15,4,4)$ in Table~\ref{tab-13-4-4},  Table~\ref{tab-14-4-4} and Table~\ref{tab-15-4-4}, respectively.

\begin{table}[!htbp]
\caption{ Identifying Vectors for Construction of $A_q(13,4,4)$}\label{tab-13-4-4}
\begin{tabular}{|c|c|c||c|c|c|}\hline
  & Identifying Vector  &  Dim  &  & Identifying Vector  &  Dim  \\ \hline
  \hline
    1 &  1100000011000 & 15 & 22 &  1000010001001 & 7 \\ \hline
    2 &  0011000011000 & 13 & 23 &  0100100000101 & 7 \\ \hline
    3 &  1010000010100 & 13 & 24 &  0100001010001 & 7 \\ \hline
    4 &  0110000001100 & 12 & 25 &  0000110000110 & 7 \\ \hline
    5 &  0101000010100 & 12 & 26 &  0000100101100 & 7 \\ \hline
    6 &  0110000010010 & 12 & 27 &  0000101001010 & 7 \\ \hline
    7 &  1100000000110 & 11 & 28 &  0000100110010 & 7 \\ \hline
    8 &  0000110011000 & 11 & 29 &  0001010000101 & 6 \\ \hline
    9 &  1010000001010 & 11 & 30 &  0010100000011 & 6 \\ \hline
    10 &  1001000010010 & 11 & 31 &  0001000110001 & 6 \\ \hline
    11 &  1001000001100 & 11 & 32 &  0010001001001 & 6 \\ \hline
    12 &  0101000001010 & 10 & 33 &  0000010101010 & 6 \\ \hline
    13 &  1000100010001 & 9 & 34 &  1000001000101 & 5 \\ \hline
    14 &  0011000000110 & 9 & 35 &  0100010000011 & 5 \\ \hline
    15 &  0000101010100 & 9 & 36 &  0100000101001 & 5 \\ \hline
    16 &  0001100001001 & 8 & 37 &  0001001000011 & 4 \\ \hline
    17 &  0000001111000 & 8 & 38 &  0000001100110 & 4 \\ \hline
    18 &  0000011001100 & 8 & 39 &  0010000100101 & 4 \\ \hline
    19 &  0010010010001 & 8 & 40 &  1000000100011 & 3 \\ \hline
    20 &  0000011010010 & 8 & 41 &  0000000011110 & 0 \\ \hline
    21 &  0000010110100 & 8 &  &  &  \\ \hline
\end{tabular}
\end{table}

\begin{table}[!htbp]
\caption{ Identifying Vectors for Construction of $A_q(14,4,4)$}\label{tab-14-4-4}
\begin{tabular}{|c|c|c||c|c|c|}\hline
  & Identifying Vector  &  Dim  &  & Identifying Vector  &  Dim  \\ \hline
  \hline
    1 &  11000000110000 & 18 & 31 &  00001100001100 & 10 \\ \hline
    2 &  00110000110000 & 16 & 32 &  00000110010100 & 10 \\ \hline
    3 &  01100000101000 & 16 & 33 &  10000100010001 & 9 \\ \hline
    4 &  10010000101000 & 15 & 34 &  10000100001010 & 9 \\ \hline
    5 &  10100000100100 & 15 & 35 &  00100010100001 & 9 \\ \hline
    6 &  10100000011000 & 15 & 36 &  00100010010010 & 9 \\ \hline
    7 &  11000000001100 & 14 & 37 &  00101000001001 & 9 \\ \hline
    8 &  01010000100100 & 14 & 38 &  00001001010100 & 9 \\ \hline
    9 &  01010000011000 & 14 & 39 &  10000001100010 & 9 \\ \hline
    10 &  00001100110000 & 14 & 40 &  01001000000110 & 9 \\ \hline
    11 &  01100000010100 & 14 & 41 &  00010010010001 & 8 \\ \hline
    12 &  10010000010100 & 13 & 42 &  00010001100001 & 8 \\ \hline
    13 &  00101000100010 & 12 & 43 &  00010001010010 & 8 \\ \hline
    14 &  00000110101000 & 12 & 44 &  00110000000011 & 8 \\ \hline
    15 &  00000011110000 & 12 & 45 &  00100100000110 & 8 \\ \hline
    16 &  00110000001100 & 12 & 46 &  00000011001100 & 8 \\ \hline
    17 &  00010100100010 & 11 & 47 &  00011000000101 & 8 \\ \hline
    18 &  10001000100001 & 11 & 48 &  00010100001001 & 8 \\ \hline
    19 &  10001000010010 & 11 & 49 &  00100001010001 & 7 \\ \hline
    20 &  00001010011000 & 11 & 50 &  01000001001010 & 7 \\ \hline
    21 &  00001001101000 & 11 & 51 &  10000010000110 & 7 \\ \hline
    22 &  00001010100100 & 11 & 52 &  10000010001001 & 7 \\ \hline
    23 &  11000000000011 & 10 & 53 &  00001100000011 & 6 \\ \hline
    24 &  01000010100010 & 10 & 54 &  01000010000101 & 6 \\ \hline
    25 &  00000101100100 & 10 & 55 &  10000001000101 & 5 \\ \hline
    26 &  00000101011000 & 10 & 56 &  00000011000011 & 4 \\ \hline
    27 &  01001000010001 & 10 & 57 &  00000000111100 & 4 \\ \hline
    28 &  01000100100001 & 10 & 58 &  00000000110011 & 2 \\ \hline
    29 &  01000100010010 & 10 & 59 &  00000000001111 & 0 \\ \hline
    30 &  00011000001010 & 10 &  &  &  \\ \hline
\end{tabular}
\end{table}

\begin{table}[!htbp]
\caption{ Identifying Vectors for Construction of $A_q(15,4,4)$}\label{tab-15-4-4}
\begin{tabular}{|c|c|c||c|c|c|}\hline
  & Identifying Vector  &  Dim  &  & Identifying Vector  &  Dim  \\ \hline
  \hline
    1 &  110000001100000 & 21 & 45 &  010000101000001 & 11 \\ \hline
    2 &  001100001100000 & 19 & 46 &  010000100010100 & 11 \\ \hline
    3 &  101000001010000 & 19 & 47 &  000101000100001 & 11 \\ \hline
    4 &  011000001001000 & 18 & 48 &  101000000000101 & 11 \\ \hline
    5 &  011000000110000 & 18 & 49 &  001100000000110 & 11 \\ \hline
    6 &  010100001010000 & 18 & 50 &  100001000010010 & 11 \\ \hline
    7 &  100100001001000 & 17 & 51 &  100001000001100 & 11 \\ \hline
    8 &  100100000110000 & 17 & 52 &  100000100100010 & 11 \\ \hline
    9 &  110000000011000 & 17 & 53 &  000000110011000 & 11 \\ \hline
    10 &  101000000101000 & 17 & 54 &  100000010100100 & 11 \\ \hline
    11 &  000011001100000 & 17 & 55 &  001010000001010 & 11 \\ \hline
    12 &  010100000101000 & 16 & 56 &  000101000001010 & 10 \\ \hline
    13 &  010010001000100 & 15 & 57 &  000100010010100 & 10 \\ \hline
    14 &  000001101010000 & 15 & 58 &  010100000000101 & 10 \\ \hline
    15 &  000000111100000 & 15 & 59 &  000100010100010 & 10 \\ \hline
    16 &  001100000011000 & 15 & 60 &  011000000000011 & 10 \\ \hline
    17 &  000101001000100 & 14 & 61 &  001000100100001 & 10 \\ \hline
    18 &  000010011010000 & 14 & 62 &  000100011000001 & 10 \\ \hline
    19 &  000010101001000 & 14 & 63 &  001000100010010 & 10 \\ \hline
    20 &  001010000100100 & 14 & 64 &  001000100001100 & 10 \\ \hline
    21 &  100010001000010 & 14 & 65 &  000100100010001 & 9 \\ \hline
    22 &  000010100110000 & 14 & 66 &  000011000000110 & 9 \\ \hline
    23 &  110000000000110 & 13 & 67 &  100100000000011 & 9 \\ \hline
    24 &  010001001000010 & 13 & 68 &  010001000001001 & 9 \\ \hline
    25 &  000001100101000 & 13 & 69 &  010000100001010 & 9 \\ \hline
    26 &  000001011001000 & 13 & 70 &  010000010100001 & 9 \\ \hline
    27 &  000001010110000 & 13 & 71 &  010000010010010 & 9 \\ \hline
    28 &  100010000010100 & 13 & 72 &  010000010001100 & 9 \\ \hline
    29 &  010001000100100 & 13 & 73 &  100000010001010 & 8 \\ \hline
    30 &  100000101000100 & 13 & 74 &  100000010010001 & 8 \\ \hline
    31 &  010010000100010 & 13 & 75 &  100000100001001 & 8 \\ \hline
    32 &  001010001000001 & 13 & 76 &  000000001111000 & 8 \\ \hline
    33 &  000011000011000 & 13 & 77 &  000010100000101 & 7 \\ \hline
    34 &  100010000100001 & 12 & 78 &  000000110000110 & 7 \\ \hline
    35 &  001001000100010 & 12 & 79 &  001000010001001 & 7 \\ \hline
    36 &  000010010101000 & 12 & 80 &  000001010000101 & 6 \\ \hline
    37 &  000100101000010 & 12 & 81 &  000001100000011 & 6 \\ \hline
    38 &  001001000010100 & 12 & 82 &  000010010000011 & 5 \\ \hline
    39 &  000110000010010 & 12 & 83 &  000000001100101 & 4 \\ \hline
    40 &  000110000001100 & 12 & 84 &  000000001010110 & 3 \\ \hline
    41 &  000100100100100 & 12 & 85 &  000000000110011 & 2 \\ \hline
    42 &  100001001000001 & 12 & 86 &  000000001001011 & 1 \\ \hline
    43 &  001000011000100 & 12 & 87 &  000000000101110 & 1 \\ \hline
    44 &  010010000010001 & 11 & 88 &  000000000011101 & 0 \\ \hline
\end{tabular}
\end{table}

\end{document}